\def\comments{1}
\documentclass{article}
\usepackage{graphicx} 
\usepackage{preamble}
\title{Hardening Confidential Federated Compute\\ against Side-channel Attacks}
\author{James Bell-Clark, Albert Cheu, Adria Gascon, Jonathan Katz\thanks{All authors employed by Google.}}
\date{}

\begin{document}

\maketitle
\begin{abstract}
    In this work, we identify a set of side-channels in our Confidential Federated Compute platform that a hypothetical insider could exploit to circumvent differential privacy (DP) guarantees. We show how DP can mitigate two of the side-channels, one of which has been implemented in our open-source library.
\end{abstract}

\section{Background} 
In the central model of differential privacy (DP), a single party is trusted to apply a DP algorithm on data contributed by individuals. The raw data is assumed to be accessible only by that party; in the lower left of Figure \ref{fig:central-vs-cfc}, this is represented by a locked arrow. Note that peer uploads, represented in red, could be maliciously crafted but DP protection accounts for this event \cite{DMNS06}.

Distributed models of DP computation handle settings with less trust, but face a number of challenges. The primary one studied is worsened privacy-accuracy tradeoff: when answering the example summation query in the top left of Figure \ref{fig:central-vs-cfc}, local DP protocols would have error growing with the number of client contributions \cite{BNO08,CSS12}, while the best error in the central model is independent of that value. Entities like a trusted shuffler and secure aggregation enable protocols that match the error in the central model \cite{CSUZZ19, BC20, CZ22, KLS21, BKMGB22}, but these results do not hold for harder problems \cite{CU21, Cheu21}. Moreover, communication complexity lower bounds exist \cite{COCB22,CCKS22}. A less-obvious challenge with distributed DP is that the inputs to the DP protocol are limited to what can be computed by an individual's device. Insights that can be obtained from large models, for example, are ruled out.

In prior work, we describe \emph{confidential federated compute (CFC)}, a platform for federated learning and analytics that strives to overcome the trust-utility tradeoff and support model inference that cannot be done on-device \cite{CFC,KMS}. Though CFC can be restricted to DP workloads, this work identifies side-channels in CFC---and any sufficiently similar system---that could be used to recover sensitive inputs. We also describe how DP noise can mitigate the leakage. We limit our scope to GROUP BY SUM queries.

We now review CFC. As its name suggests, \emph{confidentiality} is a primary objective: we aim to limit visibility of uploaded client data to server-side actors. Naturally, we rely on DP algorithms: contribution bounding, aggregation, and noise addition are all steps we have implemented \cite{aggCoreRepo}. We run these algorithms inside  \emph{confidential virtual machines (CVMs)}.\footnote{CVMs are sometimes called Trusted Execution Environments (TEEs), a term that has also been applied to technologies like Intel's SGX. We prefer the more precise term.} Designed to isolate guest memory from the host that launches it, CVMs are supported by certain lines of hardware and each line has their own techniques for isolation. For example, AMD's EPYC processors facilitate different levels of Secure Encrypted Virtualization (SEV) \cite{sev}; we use secure nested paging (SNP) \cite{sevsnp}. As we discuss, the isolation that existing CVMs offer is not complete, so our goal is to find the limitations and then overcome them.

In addition to confidentiality, CFC offers \emph{external verifiability}: we provably limit computation on client data to executables compiled from open-source code. For this purpose, we develop a stateful entity called the key management service (KMS). Uploads from client devices to our platform are encrypted with keys that the KMS manages. The KMS is given a policy, e.g., ``client data for this pipeline can be processed once by an $\eps \leq 1$-DP computation and no more.'' Any CVM that wishes to decrypt either a client upload or an intermediate result must request a key from the KMS. The KMS checks the code running on the CVM against the policy via remote attestation, which CVMs support. The KMS itself runs in a CVM; when an outsider verifies correctness of the KMS, they also confirm the privacy budget. Implementation details of the KMS can be found in our prior work and associated Github repository~\cite{KMS,cfcRepo}.

\emph{Parallelism} enables our platform to efficiently perform large-scale data analytics, as visualized on the right side of Figure \ref{fig:central-vs-cfc}. We launch multiple leaf workers, each a CVM computing a partial histogram on its partition of all contributions. A root CVM gathers what was scattered---the serialized intermediate histograms---and applies noise to the final histogram. We remark that the hosts of the CVMs are not externally verified to adhere to any policy, so we color them red. We also note that the KMS is not the focus of this work, so that component is omitted from the diagram for clarity.

\begin{figure}
    \centering
    \includegraphics[width=0.9\textwidth]{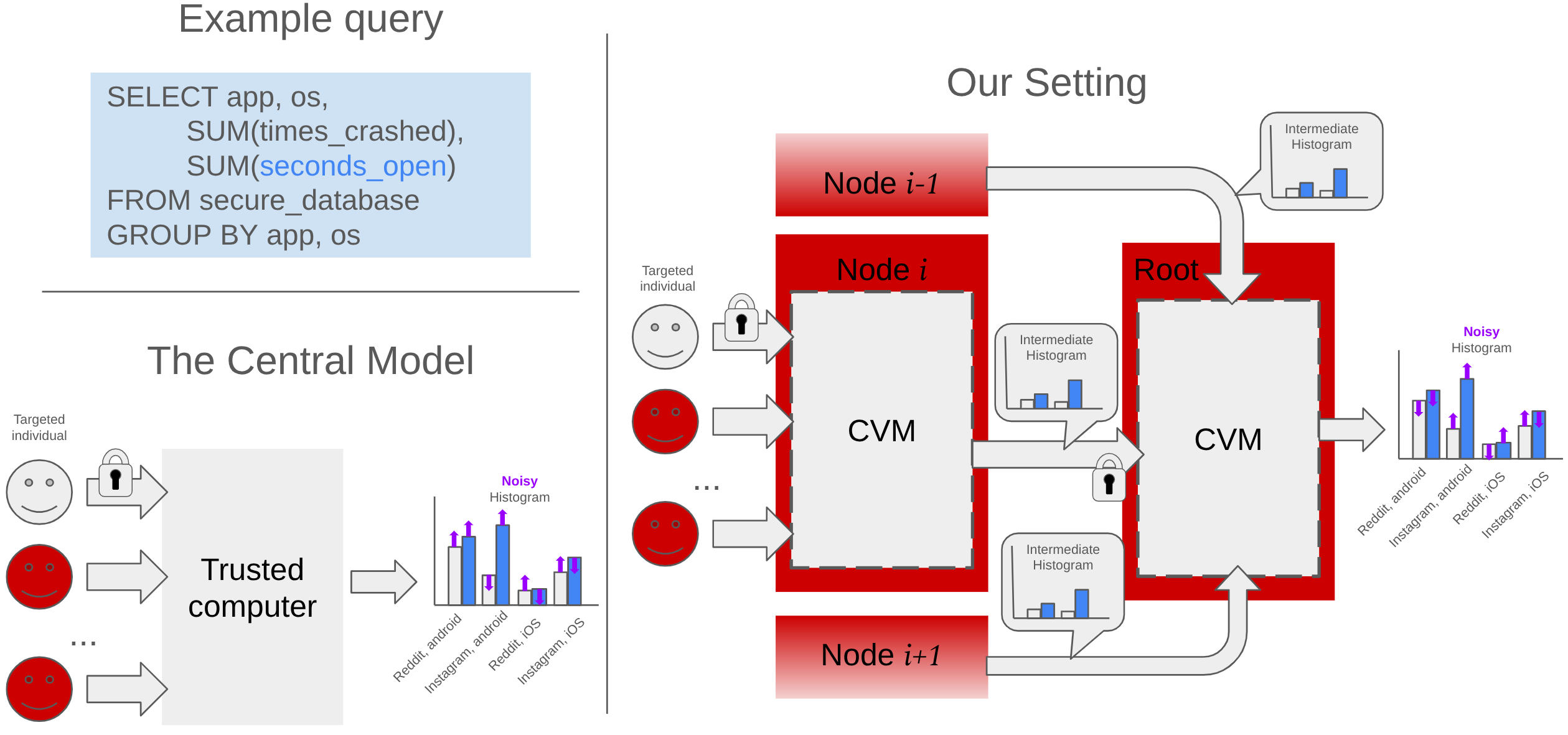}
    \caption{Visualization of the central model contrasted with our system, in the context of a toy SQL query.  Red objects are adversarial. Locked arrows indicate encrypted communications. }
    \label{fig:central-vs-cfc}
\end{figure}

Although external verifiability brings our platform close to the ideal central model, we find that \emph{certain side channels are inevitable in any system that uses multiple existing CVMs} like ours. Compromised operators could use these to side-step DP guarantees. 

\subsection{Side channels and how to observe them}

\begin{table}
\centering
\begin{tabular}{l|l|l}
Side-channel & Description & Note \\ \hline
Message length      & Length of plaintext sent from one worker to another     & This work            \\
Memory allocation              & How much memory is in use by CVM & This work            \\ \hline
Message timing      & When one CVM communicates with another                                & Open research          \\
Data access pattern                    & Transcript of where CVM read and/or wrote            & Open research           \\
Code access pattern                    & Transcript of where CVM got instructions         & Open research           \\
Data change pattern            & Transcript of changes to data made by a CVM           & Open research         
\end{tabular}
\caption{Overview of side channels that could be used in a privacy attack.}
\label{tab:side-channels}
\end{table}

We overview key side channels in Table \ref{tab:side-channels}. Some are easier for an adversary to observe than others. For the simplest example, consider an encrypted message transmitted by one of the workers, corresponding to the locked arrow linking from node $i$ to the root in Figure~\ref{fig:central-vs-cfc}. In terms of our DP algorithms library, the payload is generated by the \texttt{Serialize} method of our \texttt{DPGroupByAggregator} class \cite{aggCoreRepo}. Though the payload is not directly accessible due to encryption, the length of the payload and the timing of its transmission are observable to anyone monitoring network traffic.

A more challenging side-channel to obtain is the amount of memory a CVM uses. This requires an adversary that has remote access to the host machine, but interference with the CVM is unneccessary: because tools for monitoring virtual machines already exist in the host. For example, performance-tracking commands such as \texttt{perf stat} can be used to track page faults \cite{perfstat}, which are positively correlated with (new) memory allocation.

To surface other side channels, current state-of-the-art requires an attacker to have root access to the host of a CVM. But once that is achieved, they can deploy vendor-specific attacks. For AMD SEV-SNP the framework called SNPeek demonstrates it is possible to, among other things, force page faults whenever a page is requested \cite{SNPeek}. Each page fault essentially carries a flag indicating if the affiliated page contains data or code. Furthermore, SNPeek can surface when the ciphertext in a memory location changes, which is problematic since SEV-SNP applies deterministic encryption (so plaintext that does not change leads to no ciphertext change).


\subsection{Two Concrete Attacks}
Here, we show how an adversary can exploit two of the side-channels to launch a privacy attack.


\paragraph{Attack using Message Length} We begin with the observation that message length is correlated with the input. To illustrate why, suppose we are answering the example SQL query in Figure \ref{fig:central-vs-cfc}. Suppose also that Sybil users exclusively contribute to groups 
$\{$ (Reddit, Android), (Instagram, Android), (X, Android), (TikTok, Android), (Youtube, Android) $\}$ and an instance of our \texttt{DPGroupByAggregator} class ingests these inputs.
In preliminary versions,
the output of the \texttt{Serialize()} method would be 190 bytes long after ingesting another copy of (Reddit, Android) but it would be 217 bytes long if the last input were instead (Reddit, iOS). Appendix \ref{apdx:msg-length-attack} explains how we derived these values. Intuitively, the intermediate histogram only has the contributions to observed groups, which changes depending on the input.

\paragraph{Attack using Memory Allocation} Dynamically-sized data structures change the amount of memory allocated based on the data. In the running example, a dynamic hash table pairing each group with a running sum will use more space when there are more distinct groups. Off-the-shelf implementations like C++'s \texttt{std::unordered\_map} have a deterministic resize schedule: each implementation has a fixed sequence of capacities such that, if we repeatedly add key-value pairs, the amount of reserved memory will grow when the number of distinct keys reaches each capacity.

With this in mind, the adversary can craft Sybil uploads that consist of $T$ groups just as in the previous attack, this time choosing $T$ to be just below the point of a resize. \texttt{perf stat} will report measurably more page faults when a resize happens than not, serving as a signal that the target user is contributing new items. Refer to Figure \ref{fig:perf_attack} for a visualization of the spike in page faults.

\subsection{Our mitigations}
We show that variants of existing techniques can be used to mitigate the  threats above without introducing significant overhead. For the message-length attack, we have implemented a simple defense: \emph{add padding to inter-server messages}, where the padding length is determined by a shifted and clamped version of the Laplace distribution. The scale of noise is proportional to a sensitivity calculated by a bespoke algorithm that factors in contribution bounds and data types. 
In Figure \ref{fig:msg-mitigation}, we show that the overhead of the padding halves every time we double the number of groups. 
To generate the plot, we fixed $\delta=10^{-4}$ and bounded contributions to 1 group per client. Our \texttt{Serialize()} method was run $40$ times to generate each box. The orange lines represent medians, while the boxes represent quartiles. More details can be found in Appendix \ref{apdx:msg-length-mitigation}.

\begin{wrapfigure}[16]{r}{0.5\textwidth}
    \includegraphics[width=\linewidth]{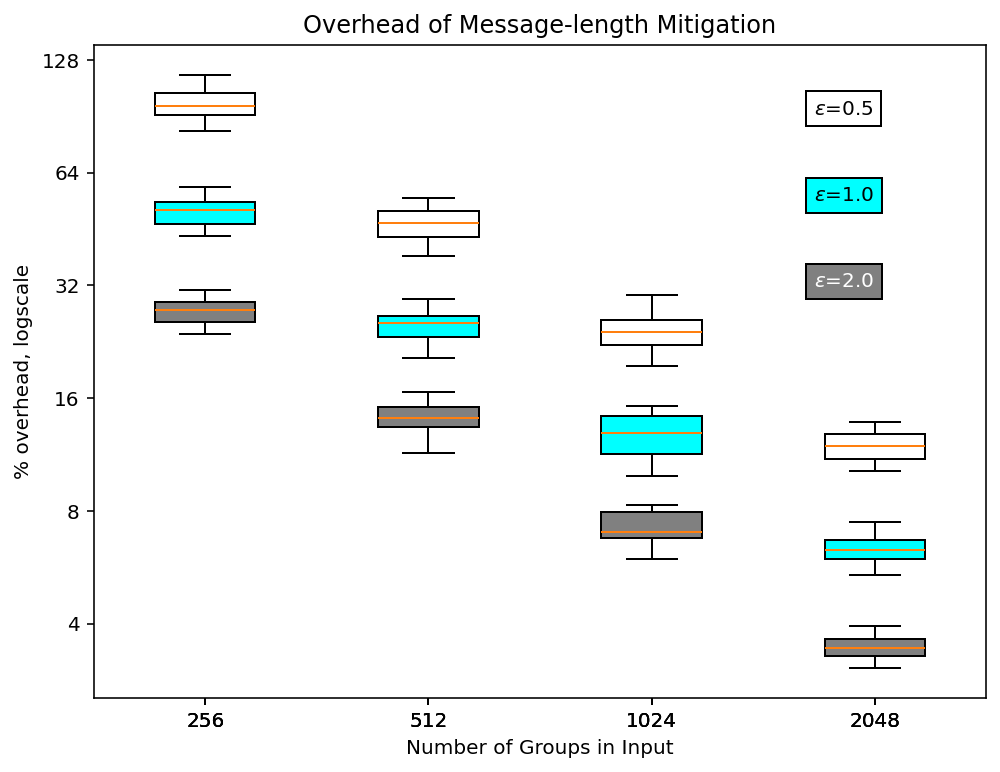}
    \caption{Overhead due to padding. Color distinguishes $\eps$ values.  }
    \label{fig:msg-mitigation}
\end{wrapfigure}

For the memory-allocation attack, we propose a variant of the AboveThreshold mechanism  to \emph{randomize when data structures resize} \cite{DR14}. Specifically, we randomize the current capacity of the data structure and we compare it with a noisy estimate of the data structure's load after every insertion, resizing only when the threshold is crossed. Care is taken to ensure capacity is never exceeded. We also observe that changing one contribution can only cause load to move in one direction, allowing us to reduce noise in AboveThreshold.

\subsection{Related Work}
DP has been used to plug side-channel leakage in prior systems like Hermetic by Xu et al.\ and Menhir by Reichert et al.~\cite{hermetic, menhir}. DP padding has also been studied in more theoretical work such as Allen et al.\ and $\varepsilon$psolute by Bogatov et al.~\cite{allen2019,bogatov2021}. But $\varepsilon$psolute does not use CVMs (or TEEs more generally), instead focusing on outsourced database systems.
Ratliff and Vadhan have a line of theoretical work on mitigating timing attacks, where random delays are injected to mask the influence of any single contribution \cite{RatliffV24, RatliffV25}; like our message lengths, these can only be positive.

Retrieval of confidential records is an important use case for the authors of $\varepsilon$psolute and Menhir; the number of retrieved records is data-dependent, so randomly padding the volume is a natural solution. In our case, we are not directly returning records but intermediate state emitted by a worker consists of histograms from the records; for similar reasons, we pad the length of that state.
Meanwhile, Hermetic relies upon bespoke ``oblivious execution environments,'' which are engineered to plug side-channel leakage for a small set of primitive functions. But these depend on a custom hypervisor; in our setting, we make no assumption about the trustworthiness that component.

\subsection{Open Research Directions}
This work focuses on defenses of two specific side-channel attacks. Future work may document more side-channels, as well as expand our set of defenses. For example, the Oblivious RAM (ORAM) literature offers options to mitigate leakage of the data-access pattern, at the price of time and memory overhead \cite{futorama,pathORAM,ringORAM}. Naively, one could allocate as much ORAM as one would ever need prior to reading any input. But a more dynamic approach would start with a small ORAM and repeatedly apply AboveThreshold to detect when to migrate the contents to a bigger ORAM. We would only pay ORAM overhead proportional to the content of an input stream, not its length.

More advanced methods would exploit the write-then-read pattern to improve efficiency. Specifically, read-from-memory operations can be deferred when answering GROUP-BY SUMs. Off-the-shelf ORAMs like PathORAM or RingORAM do not optimize for this write-then-read pattern.

\section{Preliminaries}
\subsection{Differential Privacy}
Two datasets are neighboring if they differ by changing the contributions of one individual. This is the replacement neighboring relation. The alternative add-remove relation is ill-suited for our threat model: because the adversary controls what is input to the workers, membership in the input is already known.

An algorithm $M$ is $(\eps,\delta)$-differentially private or DP if, for any neighboring inputs and any outcome $S$,
\begin{align*}
\pr{}{M(D)\in S} &\leq e^\eps \cdot \pr{}{M(D')\in S}+\delta\\
\pr{}{M(D')\in S} &\leq e^\eps \cdot \pr{}{M(D)\in S}+\delta
\end{align*}
which we sometimes shorthand as $\pr{}{M(D)\in S} \approx_{\eps,\delta} \pr{}{M(D')\in S}$. We drop $\delta$ if it is equal to 0.

A common way to argue that an algorithm is $(\eps,\delta)$-DP is to show that it is close to an $\eps$-DP algorithm, regardless of input. We use $\tvd{P-Q}$ to denote the total variation distance between distributions $P,Q$.
\begin{lem}
\label{lem:close-to-pure}
Suppose $M$ is $\eps$-DP. If there is another algorithm $M'$ that satisfies $\tvd{M(D)-M'(D)}\leq \frac{\delta}{e^\eps+1}$ for any input $D$, then $M'$ is $(\eps,\delta)$-DP.
\end{lem}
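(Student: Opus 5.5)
The approach is a direct chain of three inequalities: transfer from $M'$ to $M$ using total variation, apply the $\eps$-DP of $M$, then transfer back to $M'$. Set $\eta = \frac{\delta}{e^\eps+1}$ and fix neighboring inputs $D,D'$ and an outcome set $S$. By the definition of total variation distance, for every input $E$ and every event $S$ we have $\bigl|\pr{}{M'(E)\in S}-\pr{}{M(E)\in S}\bigr|\leq \tvd{M(E)-M'(E)}\leq \eta$. We use this once for $D$ and once for $D'$.

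The chain for the first inequality is then
\begin{align*}
\pr{}{M'(D)\in S} &\leq \pr{}{M(D)\in S}+\eta\\
&\leq e^\eps\cdot \pr{}{M(D')\in S}+\eta\\
&\leq e^\eps\cdot\bigl(\pr{}{M'(D')\in S}+\eta\bigr)+\eta\\
&= e^\eps\cdot \pr{}{M'(D')\in S}+(e^\eps+1)\eta .
\end{align*}
Since $(e^\eps+1)\eta=\delta$ by the choice of $\eta$, this gives $\pr{}{M'(D)\in S}\leq e^\eps\cdot\pr{}{M'(D')\in S}+\delta$.

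The reverse inequality follows from the same argument with the roles of $D$ and $D'$ swapped. This is valid because the $\eps$-DP guarantee for $M$ is symmetric in the two inputs, and the total variation hypothesis holds for every input. Together the two inequalities say that $M'$ is $(\eps,\delta)$-DP.

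There is no real obstacle here. The only point needing care is the accounting of the two error terms: the slack from the second transfer gets multiplied by $e^\eps$ while the slack from the first does not, which is exactly why the hypothesis uses the threshold $\delta/(e^\eps+1)$ rather than $\delta/2$.
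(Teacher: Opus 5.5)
Your proof is correct and is essentially identical to the paper's: the same three-step chain (transfer from $M'$ to $M$ via total variation, apply the $\eps$-DP of $M$, transfer back), with the two slack terms summing to $(e^\eps+1)\cdot\frac{\delta}{e^\eps+1}=\delta$. The paper writes out only one direction, and your remark that the reverse inequality follows by swapping $D$ and $D'$ supplies the step it leaves implicit.
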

\begin{proof}
For any $S$ and neighboring pair $D,D'$,
\begin{align*}
&\pr{}{M'(D) \in S}\\
\leq{}& \pr{}{M(D)\in S}+\delta/(1+e^\eps) \\
\leq{}& e^\eps \pr{}{M(D')\in S}+\delta/(1+e^\eps) \\
\leq{}& e^\eps (\pr{}{M'(D')\in S} + \delta/(1+e^\eps)) + \delta/(1+e^\eps) \\
={}& e^\eps \pr{}{M'(D')\in S} + \delta \qedhere
\end{align*}
\end{proof}

\section{Defense against Message Length Adversary}

The main idea here is simple: calculate the sensitivity of the serialized string's length, then add padding characters with a random length calibrated to the sensitivity. We refer the interested reader to the \texttt{Serialize} function of our \texttt{GroupByAggregator} and \texttt{DPGroupByAggregator} classes.

The sensitivity of \texttt{Serialize} is a function of the customer-provided SQL query and contribution bounds. Our system prevents one contributor from influencing more than $\Delta$ groups,\footnote{In code, this bound is named \texttt{max\_groups\_contributed} and is required to construct a \texttt{DPGroupByAggregator}.} which means we can trace how that influence trickles down to the length of \texttt{Serialize}'s output. The intermediate histogram is specified by columns corresponding to the grouping keys given in the query and columns corresponding to the aggregations; because our contribution bound is $\Delta$, we have a limit how much these columns can be lengthened. Each element of an aggregation column has a fixed size by virtue of being a 32-bit or 64-bit number but grouping columns can have string-typed values. Therefore, sensitivity bounds can only be derived when we impose a hard limit on how long these strings can get.

A column of strings is encoded differently than a column of numbers. Specifically, we record the lengths of the strings as a sequence of numbers followed by the strings themselves but a column of numbers simply gets mapped to the bytes that represent the numbers. We account for this distinction in our sensitivity calculations. We also account for the fact that our data is packaged within protocol buffers: we use information available at https://protobuf.dev/programming-guides/encoding/ to compute the sensitivity of variable-length integers (``varints'') that encode the number of bytes consumed by each string and each column encoding. The interested reader can refer to \texttt{CalculateSerializeSensitivity} for more information.

To generate the length of padding, we lightly modify the Laplace mechanism to ensure it is never reports a value less than its input.

\begin{algorithm}
\caption{The PositiveLaplaceMechanism}
\textbf{Parameters}: $\eps,\tau$, positive real values

\KwIn{$v$, with sensitivity $\Delta$}
\KwOut{A value at least as large as $v$}

Sample $\eta\sim \Lap(\Delta/\eps)$

$\tilde{v}\gets v+\tau+\eta$

\Return{$\max(v, \tilde{v})$}
\end{algorithm}

To prove that the PositiveLaplaceMechanism is $(\eps,\delta)$-DP, we can invoke Lemma \ref{lem:close-to-pure} for a sufficiently large $\tau$ parameter. Let $q_\phi$ be the absolute value of the $\phi$-quantile of the Laplace distribution used in the PositiveLaplaceMechanism.

\begin{thm}
\label{thm:simple-positive-laplace}
Let $h:= \delta/(1+e^\eps)$. If $\tau := q_h$ then PositiveLaplaceMechanism is $(\eps, \delta)$-DP.
\end{thm}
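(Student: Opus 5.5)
We will apply Lemma~\ref{lem:close-to-pure} with $M$ the shifted Laplace mechanism $M(v) := v+\tau+\eta$, $\eta\sim\Lap(\Delta/\eps)$, and $M'$ the PositiveLaplaceMechanism $M'(v)=\max(v, v+\tau+\eta)$. The proof reduces to two claims: $M$ is $\eps$-DP, and $M(v)$ and $M'(v)$ are within total variation distance $h=\delta/(1+e^\eps)$ for every input $v$.

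For the first claim, the shift by the constant $\tau$ is data-independent post-processing of the standard Laplace mechanism. Since $v$ has sensitivity $\Delta$ under the replacement relation, $v\mapsto v+\eta$ is $\eps$-DP by the usual density-ratio argument: $\exp(-\eps|x-v|/\Delta)/\exp(-\eps|x-v'|/\Delta)\le e^{\eps}$. Adding $\tau$ preserves this.

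For the second claim, couple $M(v)$ and $M'(v)$ through the same draw of $\eta$. The outputs coincide unless $v+\tau+\eta<v$, that is, unless $\eta<-\tau$. Hence
\[
\tvd{M(v)-M'(v)}\le \pr{}{\eta<-\tau}.
\]
By definition of $q_h$ as the absolute value of the $h$-quantile of $\Lap(\Delta/\eps)$, and by symmetry of the Laplace distribution, this probability is exactly $h$. Explicitly, $q_h=(\Delta/\eps)\ln(1/(2h))$ when $h<1/2$, which always holds here since $\delta\le 1<(1+e^\eps)/2$. The distance bound is uniform in $v$, so Lemma~\ref{lem:close-to-pure} gives that $M'$ is $(\eps,\delta)$-DP.

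The argument is short, and the main point of care is the quantile convention. The event where clamping changes the output must be exactly the lower tail of mass $h$. We also need that a coupling bounds total variation distance, a standard fact, and that strict versus non-strict inequality is irrelevant because the Laplace distribution is continuous. One should also note that the lemma is applied to mechanisms defined on datasets through $v$, so neighboring datasets correspond to values $v,v'$ with $|v-v'|\le\Delta$.
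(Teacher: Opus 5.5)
Your proof is correct and takes the same route as the paper: you use the unclamped mechanism returning $v+\tau+\eta$ as the $\eps$-DP reference, note that clamping changes the output only on the event $\eta<-\tau$, which has probability exactly $h$ when $\tau=q_h$, and then invoke Lemma~\ref{lem:close-to-pure}. You additionally spell out the coupling and the quantile computation, which the paper's proof compresses into the phrase ``by the choice of $\tau$ and $h$.''
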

\begin{proof}
Let $M$ denote the mechanism that is identical to PositiveLaplaceMechanism except it directly returns $\tilde{v}$. By the choice of $\tau$ and $h$, PositiveLaplaceMechanism$(v)$ is within $h$ of $M(v)$ in total variation distance for every $v$. Also note that $M$ is just the textbook Laplace mechanism shifted by $\tau$, which does not impact sensitivity. Thus Lemma \ref{lem:close-to-pure} implies $(\eps, \delta)$-DP.
\end{proof}

A more bespoke analysis that shows a slightly smaller $\tau$ setting suffices, which our implementation uses. We defer the details to Appendix \ref{apdx:positive-laplace}.

\section{Defense against Memory Allocation Adversary}

In this section, we describe associative maps that resize in a differentially private manner, negating the threat posed by adversaries that snoop on the amount of memory in use. We focus on associative maps because they are well-suited for federated analytics queries, our motivating problem.

Our starting point is the AboveThreshold algorithm, a classic tool from the DP literature. Given a dataset $D$ and a threshold $T$ it allows one to answer queries of the form ``is $f_i(D)>T$?'' in a differentially private manner, so long as the sensitivity of each $f_i$ is bounded. Although the formulation aligns well with our resizing objective---let $f_i$ be queries on distinct elements and set $T$ to the size threshold---we will modify it to enforce a \emph{strict stopping condition}: the first time that the threshold is reached, the new algorithm is guaranteed to halt. This prevents insertions to an at-capacity associative map. We additionally identify a \emph{unidirectional sensitivity} property which we use to reduce the scale of noise, preventing overly premature resizes. The property is also present when applying AboveThreshold to private quantile estimation.

Note that we assume that each contributor can only contribute $\Delta=1$ key-value pair; parameters can be up-scaled when contributions come in larger batches.







\subsection{Tuning AboveThreshold for Unidirectional Sensitivity (UDS)}

Consider the following applications of AboveThreshold:
\begin{itemize}
    \item Quantiles: when every $f_i(D)$ has the form ``how many elements $D$ have value $\leq v$?'' for an increasing sequence of $v$ and $T=0.9|D|$, AboveThreshold halts at an estimate of the 90th percentile of $D$.
    \item Load estimation: when each $f_i$ has the form ``how many distinct elements make up the first $i$ elements of $D$?'' and $T$ measures a datastructure's capacity for distinct elements, AboveThreshold will signal when the data structure would be close-to or just-above capacity. 
\end{itemize}
In both, we observe that not only do these queries have sensitivity 1, they have what we call unidirectional sensitivity: if $f_i(D) > f_i(D')$ for some $i$ and neighboring datasets $D,D'$ then there is no $j$ where $f_j(D) < f_j(D')$.\footnote{For the load estimation application, notice that when we replace one element of $D$, we cannot find two points in time where the number of distinct elements move in opposite directions. For the quantile application, when we replace one element in the database, we also cannot find two below-$v$ counts that move in opposite directions} AboveThreshold was designed for the more general scenario where $f_i$ may decrease or increase arbitrarily; modifying the algorithm for unidirectionality will improve accuracy.

Specifically, we present a more precisely calibrated version of AboveThreshold that reduces the scale of noise per query by a factor of two ($4/\eps$ to $2/\eps$).
\begin{algorithm}
\caption{UDSAboveThreshold, AboveThreshold modified for UDS queries}
\label{alg:uds-abovethreshold}
\KwIn{Dataset $D$, 1-sensitive queries $\{f_i\}$, threshold $T$ privacy budget $\eps$}
\KwOut{Stream $a_1,\dots$}

Let $\hat{T}\gets T+\Lap(2/\eps)$

\For{$i\in \N$}{
    Let $\nu_i \gets \Lap(2/\eps)$ \tcc{In Dwork \& Roth's AboveThreshold, it is $4/\eps$}

    \If {$f_i(D) +\nu_i \geq \hat{T}$}{
        Output $a_i \gets \top$

        Halt
    }
    \Else{
        Output $a_i \gets \bot$
    }
}
\end{algorithm}

\begin{thm}
\label{thm:uds}
For functions $\{f_i\}$ that are both 1-sensitive and have unidirectional sensitivity, UDSAboveThreshold satisfies $\eps$-DP.
\end{thm}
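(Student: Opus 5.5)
We adapt the standard proof of AboveThreshold (Dwork \& Roth), using unidirectionality so that only one of the two noise sources has to pay for the sensitivity. The output is a stream of the form $\bot^{k-1}\top$ (or the all-$\bot$ stream), so it suffices to fix a neighboring pair $D,D'$ and an index $k$, and to compare $\pr{}{\text{output}=\bot^{k-1}\top}$ under $D$ and $D'$, in both directions. The all-$\bot$ event for finite prefixes is handled the same way, and the infinite case follows by taking limits.

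By unidirectional sensitivity, after possibly swapping the roles of $D$ and $D'$, we have either $f_i(D)\le f_i(D')\le f_i(D)+1$ for all $i$, or the reverse. We therefore treat two cases for the ordered pair $(D,D')$. In the first case, $f_i(D')\ge f_i(D)$ for all $i$; in the second, $f_i(D')\le f_i(D)$ for all $i$. Throughout, 1-sensitivity caps every difference at $1$.

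In both cases we condition on the threshold noise $t=\hat T$ and write
\[
\pr{}{\bot^{k-1}\top \mid D}=\int p(t)\,\prod_{i<k}\pr{}{f_i(D)+\nu_i<t}\,\pr{}{f_k(D)+\nu_k\ge t}\,dt .
\]
Here $p$ is the density of $T+\Lap(2/\eps)$.

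\textbf{Case $f_i(D')\ge f_i(D)$ for all $i$.} We change variables $t'=t+1$. This costs a factor $e^{\eps/2}$ in $p$. Each $\bot$ factor satisfies $\pr{}{f_i(D)+\nu_i<t}\le \pr{}{f_i(D')+\nu_i<t+1}$, because $f_i(D')\le f_i(D)+1$; this is free. For the $\top$ factor, $\pr{}{f_k(D)+\nu_k\ge t}$ must be compared with $\pr{}{f_k(D')+\nu_k\ge t+1}$. Since $f_k(D')\ge f_k(D)$, the shift needed on $\nu_k$ is at most $1$, so by the Laplace density ratio this costs at most $e^{\eps/2}$. The total factor is therefore $e^{\eps}$.

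\textbf{Case $f_i(D')\le f_i(D)$ for all $i$.} We do not shift $t$ at all. The $\bot$ factors are then free, since $\pr{}{f_i(D)+\nu_i<t}\le\pr{}{f_i(D')+\nu_i<t}$. The $\top$ factor only needs $\nu_k$ shifted by $f_k(D)-f_k(D')\le 1$, which costs $e^{\eps/2}$. This is even better than needed.

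Applying both cases to each ordering of the pair gives both inequalities in the DP definition with factor $e^\eps$.

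\textbf{Main obstacle.} The subtle step is the first case. There, $\hat T$ must be shifted in the direction that keeps the $\bot$ events monotone, while the $\top$ event must still need only a unit shift. It is unidirectionality that prevents some $\bot$ query from moving the opposite way and forcing the usual extra factor of $2$ on $\nu_i$. I would carefully check that the same shift direction works for all $i<k$ simultaneously, since that is precisely where the assumption is used.
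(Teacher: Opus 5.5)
Your proof is correct, and it takes a different route from the paper's. The paper follows the Dwork--Roth template. It fixes $\nu_1,\dots,\nu_{k-1}$, sets $g(D)=\max_{i<k}\,(f_i(D)+\nu_i)$, and performs a joint change of variables: $\hat T$ is shifted by $g(D)-g(D')$ and $\nu_k$ by $g(D)-g(D')+f_k(D')-f_k(D)$. Unidirectionality enters only through a sign-alignment claim: if $g(D)>g(D')$ then some $f_i$ with $i<k$ increased, so $f_k$ cannot have decreased. This keeps the $\nu_k$ shift in $[-1,1]$.

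You instead condition on $\hat T$ and use the product form over the independent $\nu_i$. You then split once, on the global direction of change that unidirectionality guarantees, so the threshold shift is a constant ($+1$ or $0$). No max function or sign lemma is needed. Your version is more elementary, and it gives the sharper bound $e^{\varepsilon/2}$ in one of the two directions. It also covers the never-halting outcome via the finite-prefix limit, which the paper's proof does not treat. The paper's version has the merit of being a one-claim modification of the textbook argument.

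One small correction to your closing remark. In the first case, once $t$ is shifted by $+1$, the $\bot$ factors are free by 1-sensitivity alone. The assumption is used there only for the $\top$ factor, via $f_k(D')\ge f_k(D)$; in the second case it is used for the $\bot$ factors. What unidirectionality really excludes is the mixed configuration where some $f_i$ with $i<k$ increases while $f_k$ decreases. That configuration is what forces a shift of $2$ on $\nu_k$ in the standard analysis.
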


We defer the proof to Appendix \ref{apdx:uds}. At a high level the proof is the same as the textbook proof of AboveThreshold, except we use unidirectionality to halve the width of a sensitive interval and therefore halve the scale of noise.

\subsection{Enforcing a Strict Stopping Condition on UDSAboveThreshold}

\begin{algorithm}
\caption{StrictUDSAboveThreshold, a variant that never proceeds once $T$ is crossed}
\label{alg:strict-uds-abovethreshold}
\KwIn{$D,\{f_i\}, T, \eps, \delta$}
\KwOut{Stream $a_1,\dots$}

Let $q$ denote the $1-\frac{\delta}{2(1+e^\eps)}$ quantile of $\Lap(2/\eps)$

Let $\hat{T}\gets T+\Lap(2/\eps) - 2q$

\For{$i\in \N$}{
    Let $\nu_i \gets \Lap(2/\eps)$

    \If {$f_i(D)\geq T$ or $f_i(D) +\nu_i \geq \hat{T}$}{
        Output $a_i \gets \top$

        Halt
    }
    \Else{
        Output $a_i \gets \bot$
    }
}
\end{algorithm}

Algorithm \ref{alg:strict-uds-abovethreshold} has two differences with Algorithm \ref{alg:uds-abovethreshold}: the threshold is shifted down by a quantile of the Laplace distribution and we change the halting condition to guarantee that the algorithm never proceeds once $T$ is crossed.

\begin{thm}
For functions with unidirectional sensitivity, StrictUDSAboveThreshold satisfies $(\eps,\delta)$-DP.
\end{thm}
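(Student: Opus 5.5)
The plan is to apply Lemma \ref{lem:close-to-pure} with $M$ an $\eps$-DP mechanism that is close to StrictUDSAboveThreshold in total variation. Let $M$ be UDSAboveThreshold (Algorithm \ref{alg:uds-abovethreshold}) with threshold $T-2q$ in place of $T$. Shifting a fixed threshold by a data-independent constant does not affect the privacy argument, so Theorem \ref{thm:uds} shows $M$ is $\eps$-DP. Couple the two algorithms on the same noise draws: $\hat T = T - 2q + \eta$ and the same $\nu_i$. Each run then produces a stream $\bot,\dots,\bot,\top$. The two streams differ only if the strict condition $f_i(D)\geq T$ fires at some step $i$ at which the noisy test $f_i(D)+\nu_i\geq \hat T$ would not have fired. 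In that case the strict algorithm halts at an earlier step than $M$ (or $M$ never halts).

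The next step is to bound the probability of this bad event by $h:=\delta/(1+e^\eps)$ for every input $D$. Let $i^*$ be the first index with $f_{i^*}(D)\geq T$; this index is data-determined. The strict rule can change the transcript only at $i^*$, because before $i^*$ the two algorithms behave identically, and at or after $i^*$ the strict algorithm has already halted. Divergence therefore requires $f_{i^*}(D)+\nu_{i^*} < T - 2q + \eta$, which implies $\nu_{i^*} - \eta < -2q$. If this holds, then either $\eta > q$ or $\nu_{i^*} < -q$. Since $q$ is the $1-\frac{\delta}{2(1+e^\eps)}$ quantile of $\Lap(2/\eps)$, each of these events has probability at most $\frac{\delta}{2(1+e^\eps)}$. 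A union bound gives total probability at most $h$.

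Because $i^*$ is fixed by $D$ and not chosen adaptively, only one $\nu_i$ is involved. This avoids a union bound over all steps, which is why the argument gives a clean bound. The coupling then yields $\tvd{M(D)-M'(D)}\leq h$ for every $D$, where $M'$ is StrictUDSAboveThreshold. Lemma \ref{lem:close-to-pure} concludes that $M'$ is $(\eps,\delta)$-DP.

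The main obstacle is the claim that $M$ is $\eps$-DP via Theorem \ref{thm:uds}. The theorem as stated uses threshold $T$, so I would confirm that its proof uses the threshold only as a public constant, which makes replacing $T$ by $T-2q$ harmless. A secondary point is the one-step localization of divergence: one has to check carefully that when $M$ halts before $i^*$, both algorithms output identical streams.
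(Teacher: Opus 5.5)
Your proposal is correct and follows essentially the same route as the paper. Both arguments compare against the threshold-shifted but non-strict UDSAboveThreshold, which is $\eps$-DP because $T-2q$ is a public constant. Both localize the discrepancy to the first index $t$ with $f_t(D)\geq T$, bound the probability that the noisy test fails there by a union bound over the two Laplace tails $\{\eta>q\}$ and $\{\nu_t<-q\}$, and conclude via Lemma \ref{lem:close-to-pure}. Your direct derivation of $\nu_{i^*}-\eta<-2q$ is slightly cleaner than the paper's detour through the symmetry of the Laplace distribution.
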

\begin{proof}
Let $M$ denote the intermediate mechanism that shifts the threshold but does not change the halting condition. Notice that changing the threshold has no impact on sensitivity so $M$ is still $\eps$-DP.

We will argue that, for any $D$, the distributions StrictUDSAboveThreshold$(D)$ and $M(D)$ differ only by $\delta/(1+e^\eps)$ in total variation distance so that we may invoke Lemma \ref{lem:close-to-pure}.

Let $t$ denote the smallest index such that $f_t(D)\geq T$. We make two observations. First, the two algorithms behave identically up to (but not including) $t$. Second, the probability that $M(D)$ halts at some point after $t$ is nonzero but the corresponding StrictUDSAboveThreshold probability is 0; the mass was moved onto $t$ itself. Taking these two observations together, our proof is complete once we prove that the probability that $M(D)$ halts at some point after $t$ is $\leq \delta/(1+e^\eps)$.

It will suffice to dissect the halting condition:
\begin{align*}
f_i(D) +\nu_i \geq{}& \hat{T}\\
f_i(D) -T+2q \geq{}& -\nu_i+\gamma
\end{align*}
where $\gamma$ is the Laplace random variable added to $T$ in $\hat{T}$. Because $f_i(D) -T$, the halting condition would be satisfied when $-\nu_i+\gamma \leq 2q$. Due to symmetry of the Laplace distribution, it suffices to prove $\nu_i+\gamma \leq 2q$ is true except with probability $\frac{\delta}{1+e^\eps}$.

By definition of `quantile,' each Laplace random variable is bounded by $q$ except with probability $\frac{\delta}{2(1+e^\eps)}$. The proof is complete via a union bound.
\end{proof}

\subsection{``Off-the-shelf'' Associative Maps}
We assume there is an off-the-shelf data structure that wraps around a (key, value) array $A$ with the following interface:
\begin{itemize}
    \item GetLoad: returns how many distinct keys are present
    \item GetCapacity: returns how many distinct keys can be stored before a Resize occurs
    \item Resize: lengthens $A$ by a constant factor (e.g., 2) and possibly re-organizes its contents
    \item Present(key): returns True if the key is in $A$
    \item Read(key): throws error if not present. Otherwise, returns the value associated with the present key.
    \item Dump: returns all the (key, value) pairs stored in $A$
    \item Write(key, value): see Algorithm \ref{alg:write}
\end{itemize}

\begin{algorithm}
\caption{map.Write(key, value)}
\label{alg:write}

\If {Present(key)} {

Replace old value with the input value.

} \Else{

Write the input to free space in $A$

}

\If{GetLoad() $\geq$ GetCapacity()}{
    Resize()
}

\Return{}
\end{algorithm}

We can use associative maps to create a histogram out of a stream of elements

\begin{algorithm}
\caption{Histogram algorithm using an associative map}
\KwIn{$x_1,x_2,\dots, x_n$}
\For{$i\in [n]$}{
    value $\gets 1$
    
    \If{map.Present($x_i$)}{
        value $\gets 1+$ map.Read($x_i$)
    }
    
    map.Write($x_i$, value)
}

\Return{map.Dump()}
\end{algorithm}

\subsection{Associative Maps with Private Resizes}
Now we adapt Algorithm \ref{alg:strict-uds-abovethreshold} to our motivating problem: resizing associative maps. At its core, our data structure essentially constructs distinct-element queries of the prefix of the dataset and resizes when the response from StrictUDSAboveThreshold is $\top$. One technical annoyance to handle is ensuring that the impact of one contributor is isolated to one instance of StrictUDSAboveThreshold, but we deal with this by rounding up the count of distinct elements if it is below the preceding threshold.

We add the methods SetNoisedCapacity and GetNoisedCapacity which are simple interfaces to $\hat{T}$. We also add the PrivateWrite method whose pseudocode is presented in Algorithm \ref{alg:private-write}. Finally, we assume there is a method GetPreviousCapacity() to retrieve the capacity that triggered the previous Resize call (0 if it does not exist).

\begin{algorithm}
\caption{map.PrivateWrite(key, value)}
\label{alg:private-write}

\If {Present(key)} {

Replace old value with the input value.

} \Else{

Write the input to free space in $A$

}

AdjustedLoad $\gets \max$(GetPreviousCapacity(),  GetLoad())

$\nu \gets \Lap(2/\eps)$

\If{AdjustedLoad $\geq$ GetCapacity() or AdjustedLoad $+\nu\geq $ GetNoisedCapacity()} {
    Resize()

    SetNoisedCapacity(GetCapacity() $+\Lap(2/\eps) - 2q$) \tcc{$q$ identical to Algorithm \ref{alg:strict-uds-abovethreshold}}
}

\Return{the bit indicating whether Resize was called}
\end{algorithm}

\begin{thm}
Let $M$ denote the algorithm that calls PrivateWrite on each pair in its input in sequence and returns the vector of bits returned by PrivateWrite, indicating when memory resized. Under the assumption that neighboring inputs differ by replacing a single key-value pair, $M$ is $(\eps,\delta)$-DP.
\end{thm}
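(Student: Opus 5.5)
We will view $M$ as a sequential run of instances of StrictUDSAboveThreshold, one per \emph{epoch}, where epoch $k$ starts right after the $(k-1)$-th Resize (or at the beginning) and ends at the $k$-th Resize. Within epoch $k$ the capacity $C_k$ is fixed and there is a fresh noisy threshold $\hat T_k = C_k + \Lap(2/\eps) - 2q$. At each write index $i$ in the epoch there is fresh noise $\nu_i$. The queries are $f_i(D)=\max(C_{k-1},\mathrm{load}_i(D))$, where $\mathrm{load}_i$ is the number of distinct keys among the first $i$ pairs. The sequence of capacities $C_1<C_2<\dots$ is a deterministic function of the resize count, since Resize multiplies by a fixed factor. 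So the output bit vector is exactly the concatenation of the $\bot\cdots\bot\top$ streams of these instances. The final, possibly unfinished, epoch is simply a truncated stream.

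\textbf{Key steps.} First, we check that the queries $i\mapsto \mathrm{load}_i(D)$ are 1-sensitive and unidirectional under replacement of one pair at position $j$. Before $j$ nothing changes. After $j$, the difference $\mathrm{load}_i(D)-\mathrm{load}_i(D')$ lies in $\{-1,0,1\}$ and never changes sign. This is the footnote's claim, and we will verify it by noting that the multisets differ only by removing key $x$ and adding key $x'$. Applying $\max(C_{k-1},\cdot)$ preserves both properties.

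Second, and this is the crux, we show the effect of the replacement is confined to a single epoch. Let $i^\ast$ be the first index where $\mathrm{load}_{i^\ast}(D)\ne \mathrm{load}_{i^\ast}(D')$, and let $k^\ast$ be the epoch containing $i^\ast$ in a given run. Conditioned on the shared prefix of outputs, the queries of epochs before $k^\ast$ agree, so those outputs are identically distributed. For epochs $k>k^\ast$ we need $f_i(D)=f_i(D')$. The key observation is that epoch $k^\ast$ ends only when $f\ge C_{k^\ast}$ or the noisy test fires. In the later epoch the adjusted load is $\max(C_{k^\ast}, \mathrm{load}_i)$. The two loads differ by at most 1, but they could still both exceed $C_{k^\ast}$ and differ.

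This is the main obstacle. The rounding-up trick only neutralizes the difference while the loads sit below the previous capacity. To handle this, we would try a hybrid or adaptive-composition argument instead of claiming exact equality. We treat each epoch as an instance of StrictUDSAboveThreshold and observe that, across the whole run, the sign-consistent difference of at most 1 can "bite" in at most one epoch. Concretely, we would redo the UDS privacy loss argument of Theorem~\ref{thm:uds} over the whole concatenated stream. The plan is to define the coupling from that proof, which shifts the threshold noise by 1 in the direction of the unidirectional change. We apply it only in the epoch where the outputs first diverge, and argue that afterwards the shifted run and the original run take identical actions. If that fails, we fall back to accepting the statement as intended under the rounding argument: at the start of epoch $k+1$ the loads satisfy $\mathrm{load}\le C_k$ deterministically by the strict stopping rule, and both adjusted loads equal $C_k$.

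Third, we handle the strictness with Lemma~\ref{lem:close-to-pure}, exactly as in the proof of StrictUDSAboveThreshold. We compare against the non-strict mechanism, in which every epoch uses only the noisy test. Its total variation distance from $M$ must be bounded by $\delta/(1+e^\eps)$. A per-epoch union bound would multiply $\delta$ by the number of epochs. To avoid this, we argue that in the non-strict mechanism the halting inequality $-\nu_i+\gamma_k\le 2q$ must hold at the single index where each epoch's adjusted load first reaches $C_k$. We then note that only the $\gamma,\nu$ pairs at the one sensitive epoch matter for privacy, while elsewhere the mechanisms can be coupled to agree on both inputs. If this refined accounting does not go through, we state the bound as $\delta$ times the number of resizes, which is logarithmic in $n$.
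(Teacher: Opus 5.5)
\textbf{There is a genuine gap in your second step, and neither of your two branches closes it.}

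The replacement changes the distinct-key count permanently, not only at position $j$. Take $D=(a,a,c_3,\dots,c_n)$ and $D'=(a,b,c_3,\dots,c_n)$ with all $c_t$ fresh and distinct. Then $\mathrm{load}_t(D')=\mathrm{load}_t(D)+1$ for every $t\ge 2$. After any resize, the rounding makes both adjusted loads equal to the previous capacity $C_k$, but only until the load passes $C_k$. For the rest of that epoch, and in every later epoch, $\max(C_k,\mathrm{load}_t)$ again differs by one between the two inputs. So a coupling that shifts the threshold noise only in the first divergent epoch cannot make the runs take identical actions afterwards.

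Your fallback is exactly the paper's own argument, and it has the same hole. The paper shows that the \emph{first} AdjustedLoad after a resize agrees across the two inputs. It then asserts that every subsequent calculation agrees ``because the only point of difference between the input has passed.'' That conflates the index of the differing record with its effect on GetLoad, and that effect persists.

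This is a real leak, not a bookkeeping issue. In the example, index the $\nu$'s by load level; the two runs are then essentially the same process. Hence $D$'s resize times are those of $D'$ delayed by one index in \emph{every} epoch, and each epoch's own $\Lap(2/\eps)$ threshold noise has to pay for that shift. With enough well-separated capacities, these per-epoch losses are nearly independent, and their sum exceeds $\eps$ with probability far above $\delta$. So no refined accounting yields a single $(\eps,\delta)$ guarantee independent of the number of resizes. Your step-3 plan, paying the total-variation term in only one epoch, fails for the same reason: it presupposes that the other epochs are insensitive.

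What your epoch decomposition does correctly support is adaptive composition. Your first step is right: within an epoch, the queries $\max(C_{k-1},\mathrm{load}_t)$ are 1-sensitive and unidirectional, so each epoch, conditioned on earlier outputs, is a StrictUDSAboveThreshold instance. Composing these gives roughly $(m\eps,m\delta)$ when at most $m$ resizes occur. Your ``$\delta$ times the number of resizes'' fallback has the right shape, but it must be paid in $\eps$ as well. Obtaining a single $(\eps,\delta)$ would require changing the algorithm or the claim, not the proof.
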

\begin{proof}
Let $i$ be the index on which the neighbors differ. By construction, the first $i-1$ bits of the output are identically distributed for both inputs. Thus we only need to consider how the distribution of the suffix starting at $i$ changes when we switch between the neighboring inputs.

Once we fix a prefix of output bits, we have fixed a sequence of increasing capacities for the map. So for both neighbors, GetPreviousCapacity() returns the same value starting from index $i$ to the first Resize call. In turn, sensitivity of the `AboveLoad' values is 1. They are also unidirectional because if load is increased at some point by switching the input then it cannot later decrease and applying the max function does not change this property. Hence the timing of the first resize after $i$ is $(\eps,\delta)$-DP via StrictUDSAboveThreshold.

The timing of the subsequent resizes is completely insensitive to the input. The load immediately after a resize is less than the new capacity and each write adds at most one novel key, so the first calculation of AdjustedLoad after the resize is equal to the previous capacity regardless of which input was given. Furthermore, every subsequent calculation has the same outcome regardless of the input because the only point of difference between the input has passed.
\end{proof}
\paragraph{Acknowledgments} We would like to thank Katharine Daly and Stanislav Chiknavaryan for identifying the message length as a potential surface of attack. We would also like to thank Peter Kairouz, Marco Gruteser, and Daniel Ramage for supporting this line of work and encouraging publication. Finally, we would like to thank the rest of the Parfait team for continuing to push the envelope on federated and private computation in industry.

\bibliographystyle{plainnat}
\bibliography{refs.bib}
\newpage
\appendix
\section{Illustration of Signal Conveyed by perf stat}
Below we visualize the number of page faults reported by \texttt{perf stat} when we set up \texttt{std::unordered\_map} to ingest one of two streams of inputs that differ on one key.

\begin{figure}[h]
\centering
    \includegraphics[width=0.65\linewidth]{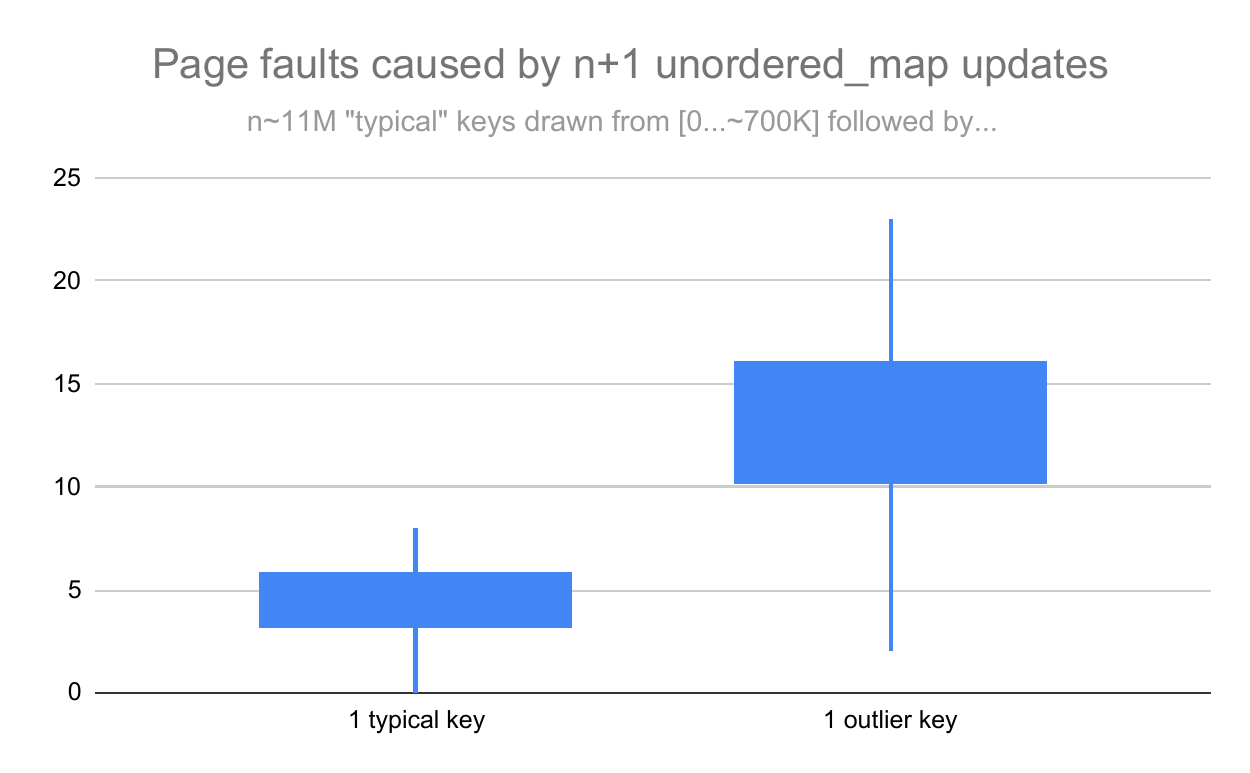}
    \caption{Page faults caused by adding to an \texttt{std::unordered\_map}.}
    \label{fig:perf_attack}
\end{figure}

The \texttt{int,int} map is filled with approximately 11M keys, each drawn from $[0, 712696]$. Every key is mapped to 1. In one case, a final write to key 1 is performed. In another case, the final write is to 712697. The 712696 constant is chosen because it is just below a resizing threshold for \texttt{std::unordered\_map}.
\section{Demonstrating the viability of the Message-Length Attack}
\label{apdx:msg-length-attack}
This section assumes we have pulled the \texttt{tensorflow\_federated} repository prior to change \texttt{11ad3d7}. If we add the following test case to \texttt{dp\_open\_domain\_histogram\_test.cc} and run it, the outcome should be 217. Switching ``iOS'' to ``android'' should change the result to 190.

\begin{verbatim}
TEST(DPOpenDomainHistogramTest, SerializationLengthTest){
  auto intrinsic = CreateIntrinsic2Key2Agg<double, double>(/*epsilon=*/1.0,
    /*delta=*/1e-4, /*l0_bound=*/1);
  auto aggregator = CreateTensorAggregator(intrinsic).value();
  absl::string_view apps[5] = {"Reddit", "Instagram", "X", "TikTok", "Youtube"};

  for (int i = 0; i < 50; i++) {
    Tensor keys1 =
        Tensor::Create(DT_STRING, {1}, CreateTestData<string_view>({apps[i%5]}))
            .value();
    Tensor keys2 =
        Tensor::Create(DT_STRING, {1}, CreateTestData<string_view>({"android"}))
            .value();

    Tensor value_tensor1 =
        Tensor::Create(DT_DOUBLE, {1}, CreateTestData<double>({1.0}))
            .value();
    Tensor value_tensor2 =
        Tensor::Create(DT_DOUBLE, {1}, CreateTestData<double>({1.0}))
            .value();
    auto acc_status =
        aggregator->Accumulate({&keys1, &keys2,
                                &value_tensor1, &value_tensor2});
    EXPECT_THAT(acc_status, IsOk());
  }
  auto serialized_state = std::move(*aggregator).Serialize();
  EXPECT_THAT(serialized_state, IsOk());
  std::cout << "serialized_state size: " << serialized_state.value().size()
            << "\n";
  aggregator =
      DeserializeTensorAggregator(intrinsic, serialized_state.value())
          .value();

  Tensor keys1 =
      Tensor::Create(DT_STRING, {1}, CreateTestData<string_view>({"Reddit"}))
          .value();
  Tensor keys2 =
      Tensor::Create(DT_STRING, {1}, CreateTestData<string_view>({"iOS"}))
          .value();

  Tensor value_tensor1 =
      Tensor::Create(DT_DOUBLE, {1}, CreateTestData<double>({1.0}))
          .value();
  Tensor value_tensor2 =
      Tensor::Create(DT_DOUBLE, {1}, CreateTestData<double>({1.0}))
          .value();
  auto acc_status =
      aggregator->Accumulate({&keys1, &keys2,
                              &value_tensor1, &value_tensor2});
  EXPECT_THAT(acc_status, IsOk());
  auto serialized_state2 = std::move(*aggregator).Serialize();
  EXPECT_THAT(serialized_state2, IsOk());
  std::cout << "serialized_state size: " << serialized_state2.value().size()
            << "\n";
}
\end{verbatim}

\section{Demonstrating the Message-length Mitigation Overhead}
\label{apdx:msg-length-mitigation}
This section assumes we have pulled the latest iteration of \texttt{tensorflow\_federated} repository. If we add the following test case to \texttt{dp\_open\_domain\_histogram\_test.cc} and run it, we will observe the deciles, quartiles, and medians needed to plot Figure \ref{fig:msg-mitigation}.

\begin{verbatim}
TEST(DPOpenDomainHistogramTest, Ablation){
  int distinct_elements[4] = {256, 512, 1024, 2048};
  std::vector<std::string> keys;
  for (int i = 0; i < 2048; i++) {
    std::ostringstream oss;
    oss << std::setw(15) << std::setfill('0') << i;
    keys.push_back(oss.str());
  }

  double epsilons[3] = {0.5, 1, 2};
  int num_runs = 40;
  double delta = 1e-4;
  for (double epsilon : epsilons) {
    for (int num_distinct_elements : distinct_elements) {
      std::vector<int> lengths;
      for (int run = 0; run < num_runs; run++) {
        auto intrinsic = CreateIntrinsic2Key2Agg<double, double>(epsilon, delta,
                                                  /*l0_bound=*/1);
        auto aggregator = CreateTensorAggregator(intrinsic).value();
        for (int i = 0; i < num_distinct_elements; i++) {
          Tensor keys1 =
          Tensor::Create(DT_STRING, {1}, CreateTestData<string_view>({keys[i]}))
              .value();
          Tensor keys2 =
              Tensor::Create(DT_STRING, {1}, CreateTestData<string_view>({"android"}))
                  .value();

          Tensor value_tensor1 =
              Tensor::Create(DT_DOUBLE, {1}, CreateTestData<double>({1.0}))
                  .value();
          Tensor value_tensor2 =
              Tensor::Create(DT_DOUBLE, {1}, CreateTestData<double>({1.0}))
                  .value();
          auto acc_status =
              aggregator->Accumulate({&keys1, &keys2,
                                      &value_tensor1, &value_tensor2});
          EXPECT_THAT(acc_status, IsOk());
        }
        auto serialized_state = std::move(*aggregator).Serialize();
        EXPECT_THAT(serialized_state, IsOk());
        lengths.push_back(serialized_state.value().size());
      }
      // std::cout << epsilon << "," << num_distinct_elements
      //           << "," << lengths[0] << "\n";

      std::sort(lengths.begin(), lengths.end());
      std::cout << absl::StrCat(epsilon, ",",
                                num_distinct_elements, ",",
                                lengths[num_runs / 10], ",",
                                lengths[num_runs / 4], ",",
                                lengths[num_runs / 2], ",",
                                lengths[3 * num_runs / 4], ",",
                                lengths[9 * num_runs / 10], "\n");
    }
  }
}
\end{verbatim}
\section{Bespoke Analysis of Positive Laplace Mechanism}
\label{apdx:positive-laplace}
\begin{thm}
If $\tau := \Delta + q_\delta$, then PositiveLaplaceMechanism is $(\eps, \delta)$-DP.
\end{thm}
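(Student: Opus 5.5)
The plan is a direct analysis of the output distribution, not a reduction to Lemma~\ref{lem:close-to-pure}; this is what removes the factor $1/(1+e^\eps)$ and lets $\tau$ absorb one extra $\Delta$.

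\textbf{Describing the output.} Fix neighboring inputs with values $v,v'$. By sensitivity $|v-v'|\le\Delta$, and without loss of generality $v\le v'$. Write $Y=\max(v,\,v+\tau+\eta)$ for the output on $v$, and $Y'$ for the output on $v'$. Then $Y$ is a mixture of two parts:
\begin{itemize}
\item a point mass at $v$ of weight $\pr{}{\eta\le -\tau}$;
\item an absolutely continuous part on $(v,\infty)$, whose density agrees there with that of $v+\tau+\Lap(\Delta/\eps)$.
\end{itemize}
The same description holds for $Y'$ with $v'$ in place of $v$.

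\textbf{Splitting any event $S$.} Split $S$ into $S_{\mathrm{hi}}=S\cap(v',\infty)$ and $S_{\mathrm{lo}}=S\cap(-\infty,v']$.

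On $S_{\mathrm{hi}}$, neither output has an atom, and both densities are shifted Laplace densities with centers $v+\tau$ and $v'+\tau$, which differ by at most $\Delta$. The textbook Laplace density-ratio bound therefore applies pointwise, giving
\[\pr{}{Y\in S_{\mathrm{hi}}}\le e^\eps\,\pr{}{Y'\in S_{\mathrm{hi}}},\]
and symmetrically with $Y$ and $Y'$ exchanged.

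\textbf{Bounding the low region.} It remains to show that each mechanism puts at most $\delta$ mass on $(-\infty,v']$.
\begin{itemize}
\item For $Y$: the event $Y\le v'$ requires $v+\tau+\eta\le v'$, hence $\eta\le \Delta-\tau=-q_\delta$. By definition of $q_\delta$ as the absolute value of the $\delta$-quantile, this has probability at most $\delta$. This is exactly where the choice $\tau=\Delta+q_\delta$ is used.
\item For $Y'$: its only mass on $(-\infty,v']$ is the atom at $v'$, of weight $\pr{}{\eta\le-\tau}\le\pr{}{\eta\le -q_\delta}\le\delta$.
\end{itemize}
Combining the two regions,
\[\pr{}{Y\in S}\le e^\eps\pr{}{Y'\in S_{\mathrm{hi}}}+\delta\le e^\eps\pr{}{Y'\in S}+\delta,\]
and the reverse inequality follows in the same way.

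\textbf{Main obstacle.} The delicate point is the bookkeeping of the atoms and the interval $(v,v']$ in the low region. I expect to handle it by not comparing the atoms to each other at all, and instead charging the entire mass of $(-\infty,v']$ to $\delta$, separately for each mechanism, as above.
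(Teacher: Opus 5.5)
Your proof is correct and follows essentially the same route as the paper's. You split $S$ at the larger input $v'$ and charge each mechanism's entire mass on $(-\infty,v']$ to $\delta$ separately, which is exactly where $\tau=\Delta+q_\delta$ is used. You then apply the pointwise Laplace density-ratio bound on $(v',\infty)$. The only differences are cosmetic: you merge the paper's first two segments into one, and you handle any $v\le v'\le v+\Delta$ rather than only $v'=v+\Delta$.
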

Notice that the value of $\tau$ expands to $\Delta+ \frac{\Delta}{\eps}\ln \frac{1}{2\delta}$ but the value in Theorem \ref{thm:simple-positive-laplace} is $\frac{\Delta}{\eps}\ln \frac{e^\eps+1}{2\delta}=\frac{\Delta}{\eps} \ln (e^\eps+1) + \frac{\Delta}{\eps}\ln \frac{1}{2\delta}$.
\begin{proof}
Fix any $v$ and $v'=v+\Delta$. There are three segments of the number line $(-\infty, v)$, $[v, v+\Delta]$, and $(v+\Delta, \infty)$; for any outcome $S\subset \R$, we split it according to those segments. We will bound the leakage incurred on each segment by making separate subclaims. For brevity, we will use $M$ to refer to PositiveLaplaceMechanism

\begin{clm}
\label{clm:zero-mass}
$\pr{}{M(v) \in S\cap (-\infty,v)} = 0 = \pr{}{M(v') \in S\cap (-\infty,v)}$
\end{clm}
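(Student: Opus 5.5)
The claim follows directly from the definition of PositiveLaplaceMechanism: its output is $\max(v,\tilde v)$, which is always at least its input. I will apply this to each of the two inputs separately and then use the ordering $v' = v+\Delta \ge v$.

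\emph{Input $v$.} For every realization of the noise $\eta$, the output satisfies $M(v) = \max(v, v+\tau+\eta) \ge v$. The event $M(v)\in(-\infty,v)$ is therefore empty, so $\pr{}{M(v)\in S\cap(-\infty,v)}=0$ for every $S$.

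\emph{Input $v'$.} By the same reasoning, every realization gives $M(v') \ge v' = v+\Delta \ge v$, since $\Delta\ge 0$. Hence $(-\infty,v)$, and a fortiori $S\cap(-\infty,v)$, has probability zero under $M(v')$ as well.

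Both probabilities equal $0$, which proves the claim. Because both sides vanish, this segment imposes no constraint on $\eps$ and adds nothing to $\delta$. There is no real obstacle here. The only point to check is that the neighboring pair was normalized so that $v'$ is the larger value; this is harmless, because the $(\eps,\delta)$ condition is symmetric in the two inputs, so we may relabel them if needed.
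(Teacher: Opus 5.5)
Your proof is correct and matches the paper's, which dismisses this claim as immediate from the clamping $\max(v,\tilde{v})$. Your version spells the argument out: every output of $M(v)$ is at least $v$, and every output of $M(v')$ is at least $v'=v+\Delta\geq v$, so both probabilities are zero.
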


\begin{clm}
\label{clm:delta-mass}
$\pr{}{M(v) \in S\cap [v, v+\Delta]}\leq \delta$
\end{clm}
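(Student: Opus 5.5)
The plan is a direct computation. Since the claim only asks for an upper bound on the mass that $M(v)$ places inside $[v,v+\Delta]$, I will drop the intersection with $S$ and bound $\pr{}{M(v)\in[v,v+\Delta]}$ instead. By monotonicity of probability this suffices.

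First I will characterize the event in terms of the noise. Recall $M(v)=\max(v,\tilde v)$ with $\tilde v = v+\tau+\eta$ and $\eta\sim\Lap(\Delta/\eps)$. Because of the max, $M(v)\geq v$ always holds. Hence $M(v)\in[v,v+\Delta]$ holds exactly when $M(v)\le v+\Delta$. Since $v\le v+\Delta$, this is equivalent to $\tilde v\le v+\Delta$. This covers both cases: if the output was clamped to $v$, then $\tilde v\le v$ and the inequality holds trivially; otherwise $M(v)=\tilde v$. Substituting $\tau=\Delta+q_\delta$, the condition $\tilde v\le v+\Delta$ becomes $v+\Delta+q_\delta+\eta\le v+\Delta$, that is, $\eta\le -q_\delta$.

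Second, I will evaluate this tail probability. By definition, $q_\delta$ is the absolute value of the $\delta$-quantile of $\Lap(\Delta/\eps)$. For $\delta<1/2$, which is the meaningful regime, that quantile is negative and equals $-q_\delta$. Therefore $\pr{}{\eta\le -q_\delta}=\delta$, and concretely $q_\delta=\frac{\Delta}{\eps}\ln\frac{1}{2\delta}$, which matches the expansion stated after the theorem. Combining the two steps gives
\[
\pr{}{M(v)\in S\cap[v,v+\Delta]}\le\pr{}{\eta\le -q_\delta}=\delta .
\]

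The only delicate point is handling the clamp correctly: I must check that the event includes outcomes where $M(v)=v$ exactly, and that these are already counted in $\{\eta\le -q_\delta\}$, so no extra mass arises. I must also make sure the quantile convention (absolute value of a negative quantile) produces exactly $\delta$ rather than $1-\delta$. Everything else is routine.
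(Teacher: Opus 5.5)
Your proof is correct and follows essentially the same route as the paper: drop $S$, use the clamp $M(v)\ge v$ to reduce to the event $M(v)\le v+\Delta$, substitute $\tau=\Delta+q_\delta$ to obtain the event $\eta\le -q_\delta$, and conclude that its probability is $\delta$ by the definition of $q_\delta$. Your explicit check of the sign convention for the quantile, and the closed form $q_\delta=\frac{\Delta}{\varepsilon}\ln\frac{1}{2\delta}$, make explicit what the paper leaves implicit in its step ``Defn.\ of $q_\delta$''.
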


\begin{clm}
\label{clm:prime-delta-mass}
$\pr{}{M(v') \in S\cap [v, v+\Delta]}\leq \delta$
\end{clm}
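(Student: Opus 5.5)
The plan is to use the fact that $M(v')$ can never fall strictly below $v'=v+\Delta$. Then the segment $[v,v+\Delta]$ captures at most the single point $v+\Delta$, and the event of landing there is a lower-tail event of the Laplace noise, which the choice of $\tau$ makes small. In particular, this claim does not need the analysis behind Claim~\ref{clm:delta-mass}.

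First, by the definition of PositiveLaplaceMechanism, $M(v') = \max(v', v'+\tau+\eta)$, where $\eta\sim\Lap(\Delta/\eps)$. So $M(v')\geq v' = v+\Delta$ always. Hence
\[
S\cap[v,v+\Delta]\ \text{can contain an output of } M(v') \text{ only at the point } v+\Delta,
\]
and therefore
\[
\pr{}{M(v')\in S\cap[v,v+\Delta]} \leq \pr{}{M(v') = v+\Delta}.
\]

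Second, $M(v')=v'$ holds exactly when $v'+\tau+\eta\leq v'$, that is, when $\eta\leq -\tau = -(\Delta+q_\delta)$. Since $\Delta\geq 0$, this event is contained in $\{\eta\leq -q_\delta\}$. By the definition of $q_\delta$ as the absolute value of the $\delta$-quantile of $\Lap(\Delta/\eps)$, and by the symmetry of the Laplace distribution about $0$, we have $\pr{}{\eta\leq -q_\delta}=\delta$. Chaining the inequalities gives the bound $\delta$, which is what the claim asserts.

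The only care point is the sign and quantile convention. Since $\delta<1/2$, the $\delta$-quantile is negative and equals $-q_\delta$, so the lower tail $\{\eta\le -q_\delta\}$ has mass exactly $\delta$. It also helps to note that the point mass at $v'$ is the only contribution, so no continuous-density argument is needed. This bound is in fact loose by a factor of roughly $e^{-\eps}$ because of the extra $\Delta$ in $\tau$. That slack is presumably spent in Claim~\ref{clm:delta-mass}, where $M(v)$ can spread its mass continuously across $[v,v+\Delta]$, rather than here.
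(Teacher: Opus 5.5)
Your proposal is correct and follows essentially the same route as the paper. Both bound the probability by $\pr{}{M(v')\le v+\Delta}$, identify this as the event $\tau+\eta\le 0$ (the point mass at $v'$), and compare with the $\delta$-quantile. Your final step $\pr{}{\eta\le-(\Delta+q_\delta)}\le\pr{}{\eta\le -q_\delta}=\delta$ is also the correct form of the paper's chain, which passes through the garbled expression $\pr{}{q_\delta+\eta=0}$.
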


\begin{clm}
\label{clm:bounded-loss}
Either $S\cap (v+\Delta, \infty)$ is empty or $\exp(-\eps) \leq \frac{\pr{}{M(v) \in S\cap (v+\Delta, \infty)}} {\pr{}{M(v') \in S\cap (v+\Delta, \infty)}} \leq \exp(\eps)$ where numerator and denominator are strictly positive.
\end{clm}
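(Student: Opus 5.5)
The plan is to compute the law of $M(v)$ and $M(v')$ directly on the open half-line $(v+\Delta,\infty)$. There both mechanisms reduce to the unclamped shifted Laplace mechanism, and the claim follows from the usual pointwise density-ratio bound.

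\textbf{Step 1: no clamping on the region.} Recall $M(u)=\max(u,\,u+\tau+\eta)$ with $\eta\sim\Lap(\Delta/\eps)$. The only atom of $M(u)$ sits at $u$. For $u=v$ this atom is at $v$, and for $u=v'=v+\Delta$ it is at $v+\Delta$. Neither point lies in the open interval $(v+\Delta,\infty)$. So on this interval the clamp is inactive for both inputs:
\begin{align*}
\pr{}{M(v)\in S\cap(v+\Delta,\infty)} &= \pr{}{v+\tau+\eta\in S\cap(v+\Delta,\infty)},\\
\pr{}{M(v')\in S\cap(v+\Delta,\infty)} &= \pr{}{v'+\tau+\eta\in S\cap(v+\Delta,\infty)}.
\end{align*}
The first equality uses that any output $y>v+\Delta>v$ must equal $v+\tau+\eta$; the second is analogous with $v'$ in place of $v$. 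Hence both restricted laws are absolutely continuous, with densities
\[
p(y)=\tfrac{\eps}{2\Delta}e^{-\eps|y-v-\tau|/\Delta}
\quad\text{and}\quad
p'(y)=\tfrac{\eps}{2\Delta}e^{-\eps|y-v'-\tau|/\Delta}.
\]

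\textbf{Step 2: pointwise ratio bound.} By the triangle inequality, $\bigl||y-v-\tau|-|y-v'-\tau|\bigr|\le|v-v'|=\Delta$. Therefore $e^{-\eps}\le p(y)/p'(y)\le e^{\eps}$ for every $y$. Integrating over $S\cap(v+\Delta,\infty)$ gives the two-sided bound on the probabilities. Since $p,p'>0$ everywhere, both probabilities are strictly positive whenever the set has positive Lebesgue measure.

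\textbf{Main obstacle: the degenerate case.} The step needing care is when $S\cap(v+\Delta,\infty)$ is nonempty but Lebesgue-null. Then both probabilities are $0$ and the ratio is undefined. I would read ``empty'' in the claim as ``null'' (equivalently, restrict to sets of positive measure). In that case both sides vanish, which is all the later combination with Claims~\ref{clm:zero-mass}--\ref{clm:prime-delta-mass} needs.

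\textbf{Orientation of neighbors.} I would also remark that fixing $v'=v+\Delta$ is without loss of generality, since it is the worst-case shift. For $|v-v'|<\Delta$ the same argument gives ratio at most $e^{\eps|v-v'|/\Delta}\le e^{\eps}$. The reversed orientation is handled by relabeling the two inputs.
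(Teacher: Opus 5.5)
Your proof is correct and takes essentially the same route as the paper. On the open half-line $(v+\Delta,\infty)$ the clamp is inactive for both inputs, so the claim reduces to the pointwise ratio bound for two Laplace densities whose centers differ by $\Delta$; the paper gets there by translating the sets rather than writing out the two densities, but the argument is the same. Your remark about the degenerate case is a genuine, if minor, correction: the paper claims strict positivity whenever $S\cap(v+\Delta,\infty)$ is nonempty, which fails for a nonempty Lebesgue-null set such as a singleton. Reading ``empty'' as ``null'' is the right fix, and it costs nothing downstream because both probabilities are then zero.
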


Once we prove the above, we have
\begin{align*}
&\pr{}{M(v)\in S} \\
={}& \pr{}{M(v) \in S\cap (-\infty,v)} + \pr{}{M(v) \in S\cap [v, v+\Delta]} + \pr{}{M(v) \in S\cap (v+\Delta, \infty)}\\
={}& \pr{}{M(v') \in S\cap (-\infty,v)} + \pr{}{M(v) \in S\cap [v, v+\Delta]} + \pr{}{M(v) \in S\cap (v+\Delta, \infty)}\\
\leq{}& \pr{}{M(v') \in S\cap (-\infty,v)} + \pr{}{M(v') \in S\cap [v, v+\Delta]} + \delta + \pr{}{M(v) \in S\cap (v+\Delta, \infty)}\\
\leq{}& \pr{}{M(v') \in S\cap (-\infty,v)} + \pr{}{M(v') \in S\cap [v, v+\Delta]} + \delta + \exp(\eps)\cdot \pr{}{M(v') \in S\cap (v+\Delta, \infty)}\\
\leq{}& \exp(\eps)\cdot \pr{}{M(v') \in S} + \delta
\end{align*}
A completely symmetric series of steps results in $\pr{}{M(v') \in S} \leq e^\eps \pr{}{M(v) \in S} + \delta$

Claim \ref{clm:zero-mass} is immediate from the imposition of $\max(v,\tilde{v})$. To prove Claim \ref{clm:delta-mass}, we rely on the choice of $\tau$:
\begin{align*}
\pr{}{M(v) \in S\cap [v, v+\Delta]} \leq{}& \pr{}{M(v) \leq v+\Delta} \\
={}& \pr{}{\max(v,v+\tau+\eta)\leq v+\Delta} \tag{By construction} \\
={}& \pr{}{\max(0,\tau+\eta)\leq \Delta} \\
={}& \pr{}{\max(0,\Delta+q_\delta+\eta)\leq \Delta} \tag{Choice of $\tau$}\\
={}& \pr{}{q_\delta+\eta \leq 0}\\
={}& \delta \tag{Defn. of $q_\delta$}
\end{align*}

The proof of Claim \ref{clm:prime-delta-mass} is similar:
\begin{align*}
\pr{}{M(v') \in S\cap [v, v+\Delta]} \leq{}& \pr{}{M(v') \leq v+\Delta}\\
={}& \pr{}{\max(v',v'+\tau+\eta)\leq v+\Delta} \tag{By construction} \\
={}& \pr{}{\max(0,\tau+\eta) \leq 0} \tag{Value of $v'$} \\
={}& \pr{}{q_\delta + \eta = 0} \\
<{}& \delta
\end{align*}

We finally prove Claim \ref{clm:bounded-loss}. First, recall that a Laplace distribution places nonzero density on all real values, which means both $\pr{}{M(v) \in S\cap (v+\Delta, \infty)}$ and $\pr{}{M(v') \in S\cap (v+\Delta, \infty)}$ are strictly positive whenever $S\cap (v+\Delta, \infty)$ is non-empty. Then, we reason about the ratio of the two probabilities:
\begin{align*}
\frac{\pr{}{M(v) \in S\cap (v+\Delta, \infty)}} {\pr{}{M(v') \in S\cap (v+\Delta, \infty)}} ={}& \frac{\pr{}{\max(v,v+\tau+\eta) \in S\cap (v+\Delta, \infty)}}{\pr{}{\max(v',v'+\tau+\eta) \in S\cap (v+\Delta, \infty)}}\\
={}&\frac{\pr{}{\max(v,v+\tau+\eta) \in S\cap (v+\Delta, \infty)}}{\pr{}{\max(v+\Delta,v+\Delta+\tau+\eta) \in S\cap (v+\Delta, \infty)}}\\
={}&\frac{\pr{}{\max(v,v+\Delta+q_\delta+\eta) \in S\cap (v+\Delta, \infty)}}{\pr{}{\max(v+\Delta, v+2\Delta+q_\delta +\eta) \in S\cap (v+\Delta, \infty)}}\\
={}&\frac{\pr{}{\max(0,\Delta+q_\delta+\eta) \in S_{-v}\cap (\Delta, \infty)}}{\pr{}{\max(0, \Delta+q_\delta +\eta) \in S_{-v-\Delta}\cap (0, \infty)}}
\end{align*}
where we use $S_{-\Delta}$ to denote the subset of $\R$ where every member is $\Delta$ less than some element of $S$.

We can reinterpret the final ratio as concerning the Laplace distribution with mean $\Delta+q_\delta>0$ and scale parameter $\Delta/\eps$ whose mass below 0 is forced upon 0. Specifically, it compares how much mass the clamped distribution places on a subset of $(\Delta,\infty)$ against the subset formed by shifting values down by $\Delta$. Notice that the ratio would be the same if the $\max(0,\dots)$ clamping was not applied because the mass on the two subsets in question are unaffected. As a result, it suffices to reason about 
$$
\frac{\pr{}{\Delta+q_\delta+\eta \in S_{-v}\cap (\Delta, \infty)}}{\pr{}{\Delta+q_\delta +\eta \in S_{-v-\Delta}\cap (0, \infty)}}
$$
Here, we appeal to the fact that the density placed on any $x$ by a Laplace distribution with scale parameter $\Delta/\eps$ is within $\exp(\eps)$ of the density placed on $x+\Delta$ by the same distribution. This results in an upper and lower bound of $\exp(-\eps),\exp(\eps)$.
\end{proof}

\section{Privacy Proof of UDSAboveThreshold}
\label{apdx:uds}
\begin{proof}[Proof of Thm \ref{thm:uds}]
For any integer $k\geq 1$, we will argue
$$\pr{}{\textrm{UDSAboveThreshold}(D) = \bot^{k-1} \top} \approx_\eps \pr{}{\textrm{UDSAboveThreshold}(D') = \bot^{k-1} \top}$$
only over the randomness of $\hat{T}$ and $\nu_k$: fixing $\nu_1,\dots,\nu_{k-1}$, define the quantity
$$
g(D) := \max_{i<k} f_i(D) +\nu_i
$$
so that
\begin{align*}
&\pr{}{\textrm{UDSAboveThreshold}(D) = \bot^{k-1} \top}\\
={}& \pr{}{\hat{T} > g(D) \textrm{~and~} f_k(D)+\nu_k\geq \hat{T}}\\
={}& \int_{-\infty}^\infty \int_{-\infty}^\infty \indic{t\in (g(D), f_k(D)+v)} \cdot \pr{}{\hat{T}=t}  \pr{}{\nu_k = v} ~ \mathit{dv}~ \mathit{dt}
\end{align*}

We now define the following
\begin{gather*}
\hat{v} := v+g(D)-g(D')+f_k(D')-f_k(D)\\
\hat{t} := t+g(D)-g(D')    
\end{gather*}
in order to perform a change of variables:
\begin{align*}
&\pr{}{\textrm{UDSAboveThreshold}(D) = \bot^{k-1} \top}\\
={}& \int_{-\infty}^\infty \int_{-\infty}^\infty  \pr{}{\hat{T}=\hat{t}}\cdot \pr{}{\nu_k = \hat{v}} \cdot \indic{t+g(D)-g(D') \in (g(D), f_k(D')+v+g(D) -g(D'))} ~ \mathit{dv}~ \mathit{dt}\\
={}& \int_{-\infty}^\infty \int_{-\infty}^\infty  \pr{}{\hat{T}=\hat{t}}\cdot \pr{}{\nu_k = \hat{v}} \cdot \indic{t\in (g(D'), f_k(D')+v)} ~ \mathit{dv}~ \mathit{dt}\\
\leq{}& \int_{-\infty}^\infty \int_{-\infty}^\infty  e^{\eps/2} \pr{}{\hat{T}=t}\cdot e^{\eps/2}\pr{}{\nu_k = v} \cdot \indic{t\in (g(D'), f_k(D')+v)} ~ \mathit{dv}~ \mathit{dt}\\
={}& e^\eps\cdot \pr{}{\hat{T} > g(D') \textrm{~and~} f_k(D')+\nu_k\geq \hat{T}}\\
={}& e^\eps\cdot \pr{}{\textrm{UDSAboveThreshold}(D) = \bot^{k-1} \top}
\end{align*}
where the inequality comes (a) from our choice of Laplace parameter, (b) $|\hat{t}-t|\leq 1$, and (c) $|\hat{v}-v|\leq 1$. (b) follows from the sensitivity bound. To prove (c), we rely on the following technical claim: when $g(D)-g(D')$ is positive (resp. negative) then $f_k(D')-f_k(D)$ must be negative or zero (resp. positive or zero).

If $g(D)-g(D')>0$, then we argue there must be some $i<k$ where $f_i(D)-f_i(D')>0$. Assume for contradiction this is not the case. Then for every $i<k$
\begin{align*}
    f_i(D)-f_i(D') &\leq 0\\
    f_i(D) + \nu_i &\leq f_i(D') + \nu_i
\end{align*}
In turn, we must conclude that $\max_{i<k} f_i(D) + \nu_i \leq \max_{i<k} f_i(D') + \nu_i$. But this contradicts the fact that $g(D)-g(D')>0$.

Because $g(D)-g(D')>0$ implies some $i<k$ where $f_i(D)-f_i(D')>0$, we now invoke unidirectionality of sensitivity: $f_k(D)-f_k(D')\geq 0$. Then we conclude $f_k(D')-f_k(D)$ must be negative or zero as desired.

The proof for the $g(D)-g(D')< 0$ case is symmetric.
\end{proof}

\end{document}